\def\BibTeX{{\rm B\kern-.05em{\sc i\kern-.025em b}\kern-.08em T\kern-.1667em\lower.7ex\hbox{E}\kern-.125emX}}
\algnewcommand{\Initialize}[1]{%
  \State \textbf{Initialization:}
  \Statex \hspace*{\algorithmicindent}\parbox[t]{0.8\linewidth}{\raggedright #1}
}
\theoremstyle{definition}
\newtheorem{theorem}{Theorem}
\newtheorem{definition}{Definition}
\newtheorem{lemma}{Lemma}
\newtheorem{prop}{Proposition}
\newtheorem{cor}{Corollary}
\newtheorem{remark}{Remark}
\newtheorem{eg}{Example}
\newcommand{\norm}[1]{\left\lVert#1\right\rVert}
\newcommand{\argmax}{\operatornamewithlimits{argmax}}
\newcommand{\relmiddle}[1]{\mathrel{}\middle#1\mathrel{}} 
\def\br{\mathbb R}
\def\vE{\mathbb E}
\font\b=cmr10 scaled\magstep4
\def\bigzerou{\smash{\lower1.7ex\hbox{\b 0}}}
\def\bigzerou{\smash{\lower1.7ex\hbox{\b 0}}}
\begin{document}
\title{A Generalized Leakage Interpretation of Alpha-Mutual Information
\thanks{This work was supported by JSPS KAKENHI Grant Number JP23K16886 and JST CRONOS Grant Number JPMJCS25N5.}
}

\author{
\IEEEauthorblockN{Akira Kamatsuka}
\IEEEauthorblockA{Shonan Institute of Technology \\ 
Email: \text{kamatsuka@info.shonan-it.ac.jp}
 }
\and
\IEEEauthorblockN{Takahiro Yoshida}
\IEEEauthorblockA{Nihon University \\ 
Email: \text{yoshida.takahiro@nihon-u.ac.jp}
 } 
}
\maketitle

\begin{abstract}
This paper presents a unified interpretation of $\alpha$-mutual information ($\alpha$-MI) in terms of generalized $g$-leakage.
Specifically, we present a novel interpretation of $\alpha$-MI within an extended framework for quantitative information flow based on adversarial generalized decision problems.
This framework employs the Kolmogorov-Nagumo mean and the $q$-logarithm to characterize adversarial
gain.
Furthermore, we demonstrate that, within this framework, the parameter $\alpha$ can be interpreted as a measure of the adversary's risk aversion.
\end{abstract}

\section{Introduction} \label{sec:intro}
Quantifying the information that a random variable $Y$ conveys about another random variable $X$ is a fundamental problem in information theory and statistical decision-making.
The most widely used measure for this purpose is Shannon mutual information (MI), denoted by $I(X;Y)$ \cite{shannon}.
Since its inception, numerous alternative measures have been proposed across several research disciplines, including information theory, decision theory, and information security, each motivated by distinct operational and conceptual considerations.

Within information theory, a prominent class of generalizations is the family of $\alpha$-mutual information ($\alpha$-MI) \cite{7308959}.
This family encompasses several distinct measures, including Sibson MI \cite{Sibson1969InformationR}, Arimoto MI \cite{arimoto1977}, Augustin--Csisz{\'a}r MI \cite{augusting_phd_thesis,370121}, Hayashi MI \cite{cryptoeprint:2013/440}, and Lapidoth--Pfister MI \cite{e21080778,8231191}, and is typically parameterized by $\alpha \in (0,1)\cup(1,\infty)$.
These measures generalize Shannon MI in closely related ways and play essential roles in diverse operational settings.
For instance, among other applications, $\alpha$-MI characterizes generalized cutoff rates and decoding error exponents in channel coding \cite{370121,e23020199}, error and strong converse exponents in hypothesis testing \cite{6034266,8231191,8007073}, and information leakage in privacy-preserving data publishing \cite{8804205,8943950,9162168,10352344,6266165}.
Extensive research work has established their theoretical properties, operational interpretations, and computing methods often through connections with R{\'e}nyi entropy and divergence \cite{7282554,e21100969,e22050526,Nakiboglu:2019aa,8423117,8849809,e23060702,11078297,10619174,10206941,10619672,8550766,9505206,9064819,11195464,4595361,10858266,10619657,10619200}.

From a decision-theoretic perspective, informativeness has traditionally been quantified using the \textit{value of information} \cite{raiffa1961applied}, which measures the improvements in decision-making performance with respect to $X$ after observing $Y$.
This concept has gained renewed attention in information-theoretic security and quantitative information flow (QIF), where informativeness is interpreted as privacy leakage.
In this context, Issa \textit{et al.} \cite{8943950} introduced \textit{maximal leakage}, defined as the multiplicative increase in an adversary's optimal expected gain after observing $Y$.
This notion was subsequently generalized to \textit{$\alpha$-leakage} and \textit{maximal $\alpha$-leakage} by Liao \textit{et al.} \cite{8804205}.
More broadly, within the QIF framework, Alvim \textit{et al.} \cite{6957119,7536368,ALVIM201932} proposed \textit{$g$-leakage}, which quantifies information leakage under general adversarial gain functions using the concept of vulnerability, the dual notion of entropy.

These developments naturally raise the question of how $\alpha$-MI relates to leakage-based measures of informativeness.
In particular, determining whether and how $\alpha$-MI can be represented within the $g$-leakage framework is of fundamental interest.
An important step in this direction was taken by Liao \textit{et al.} \cite{8804205}, who showed that Arimoto MI coincides with $\alpha$-leakage.
More recently, Kamatsuka and Yoshida \cite{11195284} established that all major variants of $\alpha$-MI admit representations as privacy leakage measures.
Furthermore, Zarrabian and Sadeghi \cite{Zarrabian:2025aa} demonstrated that $\alpha$-leakage--and hence Arimoto MI--can be expressed as a generalized $g$-leakage by replacing the standard expectation with a Kolmogorov--Nagumo (KN) mean \cite{1362262946367354880}.

Despite these advances, existing $g$-leakage representations remain limited in scope.
In particular, a unified $g$-leakage formulation that encompasses $\alpha$-MI beyond the Arimoto case has not yet been established.
Clarifying this relationship is essential for understanding the role of $\alpha$-MI as a general measure of information leakage and for unifying information-theoretic and decision-theoretic perspectives.
The main contributions of this paper are summarized as follows:
\begin{itemize}
\item We introduce a novel definition of generalized conditional vulnerability based on the KN mean (Definition \ref{def:generalized_vulnerability}), employing an approach that differs from existing constructions.
\item We derive a $q$-logarithmic generalization of Gibbs' inequality (Proposition \ref{prop:generalized_Gibbs_inequality}) and use it to relate various forms of R{\'e}nyi conditional entropy to the proposed generalized conditional vulnerability (Theorem \ref{thm:alpha_conditional_entropy_vulnerability}).
\item Building on these results, we establish generalized $g$-leakage representations for $\alpha$-MI beyond the Arimoto case (Corollary \ref{cor:alpha_MI_generalized_leakage}). Furthermore, we demonstrate that the parameter $\alpha$ can be interpreted as a measure of the adversary's risk aversion (Section \ref{ssec:interpretation_alpha}). 
\end{itemize}
These results provide a unified leakage-based interpretation of $\alpha$-MI and deepen the theoretical understanding of its role in information-theoretic security and decision theory.

\section{Preliminaries}\label{sec:preliminaries}
Let $X$ and $Y$ be random variables taking values in finite alphabets $\mathcal{X}$ and $\mathcal{Y}$, respectively, with joint distribution $p_{X,Y} = p_X p_{Y|X}$.
Shannon entropy, conditional entropy, and Shannon mutual information (MI) are defined as follows: 
$H(X):=-\sum_{x}p_{X}(x)\log p_{X}(x)$, $H(X | Y):=-\sum_{x,y}p_{X}(x)p_{Y\mid X}(y | x)\log p_{X\mid Y}(x | y)$, and $I(X; Y):= H(X) - H(X | Y)$, respectively.
Throughout this paper, $\log$ denotes the natural logarithm.
Let $\Delta_{\mathcal{X}}$ denote the set of all probability distributions on $\mathcal{X}$.
For $\alpha>0$ and $p\in\Delta_{\mathcal{X}}$, the \textit{$\alpha$-tilted distribution of $p$} \cite{8804205}, denoted by 
$p^{(\alpha)}\in\Delta_{\mathcal{X}}$, is defined as follows: 
$p^{(\alpha)}(x) := \frac{p(x)^{\alpha}}{\sum_{x\in\mathcal{X}} p(x)^{\alpha}}$. 
For a real-valued random variable $X$ with distribution $p$, its expectation is denoted by $\mathbb{E}^{p}[X]:=\sum_{x}p(x)x$.
Furthermore, let $r_{X|Y}:=\{r_{X|Y}(\cdot|y)\}_{y\in\mathcal{Y}}$ denote a family of conditional distributions on $\mathcal{X}$ given $Y=y$.

\subsection{$\alpha$-Mutual information ($\alpha$-MI)}
We first review R{\' e}nyi entropy, R{\' e}nyi divergence, and several variants of $\alpha$-MI. 
\begin{definition}
Let $\alpha\in (0, 1)\cup (1, \infty)$. Given distributions $p_{X}$ and $q_{X}$, the R{\' e}nyi entropy of order $\alpha$ 
and the R{\' e}nyi divergence of order $\alpha$ between $p_{X}$ and $q_{X}$ are defined as follows:
\begin{align}
H_{\alpha}(p_{X})=H_{\alpha}(X) &:= \frac{1}{1-\alpha} \log \sum_{x} p_{X}(x)^{\alpha}, \label{eq:Renyi_ent} \\
D_{\alpha}(p_{X} || q_{X}) &:= \frac{1}{\alpha-1}\log \sum_{x} p_{X}(x)^{\alpha}q_{X}(x)^{1-\alpha}.
\end{align}
\end{definition}

\begin{definition} \label{def:alpha_MI}
Let $\alpha\in (0, 1)\cup (1, \infty)$ and $(X, Y)\sim p_{X, Y}=p_{X}p_{Y\mid X}$. 
The \textit{Sibson MI, Arimoto MI, Augustin--Csisz{\' a}r MI, Hayashi MI, and Lapidoth--Pfister MI of order $\alpha$}, denoted by $I_{\alpha}^{\text{S}}(X; Y), I_{\alpha}^{\text{A}}(X; Y), I_{\alpha}^{\text{C}}(X; Y), I_{\alpha}^{\text{H}}(X; Y)$, and $I_{\alpha}^{\text{LP}}(X; Y)$, respectively, 
are defined as follows:

\begin{align}
I_{\alpha}^{\text{S}} (X; Y) &:= \min_{q_{Y}} D_{\alpha} (p_{X}p_{Y\mid X} || p_{X}q_{Y}) \label{eq:Sibson_MI} \\ 
I_{\alpha}^{\text{A}}(X; Y) &:= H_{\alpha}(X) - H_{\alpha}^{\text{A}}(X\mid Y), \label{eq:Arimoto_MI} \\
I_{\alpha}^{\text{C}}(X; Y) &:= \min_{q_{Y}}\vE^{p_{X}}\left[D_{\alpha}(p_{Y\mid X}(\cdot \mid X) || q_{Y})\right], \label{eq:AC_MI}\\ 
I_{\alpha}^{\text{H}}(X; Y) &:= H_{\alpha}(X) - H_{\alpha}^{\text{H}}(X\mid Y), \label{eq:Hayshi_MI} \\
I_{\alpha}^{\text{LP}}(X; Y) &:= \min_{q_{X}}\min_{q_{Y}} D_{\alpha}(p_{X}p_{Y\mid X} || q_{X}q_{Y}), \label{eq:LP_MI}
\end{align}
where the minimum in \eqref{eq:Sibson_MI} and \eqref{eq:AC_MI} is taken over all probability distributions on $\mathcal{Y}$,  
and the minimum in \eqref{eq:LP_MI} is taken over all product distributions on $\mathcal{X}\times \mathcal{Y}$. 
The Arimoto conditional entropy of order $\alpha$ \cite{arimoto1977} and the Hayashi conditional entropy of order $\alpha$ \cite{5773033} are defined as follows: 
\begin{align}
H_{\alpha}^{\text{A}}(X | Y)&:= \frac{\alpha}{1-\alpha}\log\sum_{y} \left( \sum_{x}p_{X}(x)^{\alpha}p_{Y\mid X}(y\mid x)^{\alpha} \right)^{\frac{1}{\alpha}}, \label{eq:Arimoto_cond_renyi_ent} \\
H_{\alpha}^{\text{H}}(X | Y)&:= \frac{1}{1-\alpha}\log\sum_{y}p_{Y}(y) \sum_{x}p_{X\mid Y}(x | y)^{\alpha}. \label{eq:Hayashi_cond_renyi_ent}
\end{align}
\end{definition}

\begin{remark}
Each $\alpha$-MI admits a continuous extension to $\alpha=1$ and $\alpha=\infty$.
In particular, $\alpha=1$ recovers Shannon MI $I(X;Y)$.
\end{remark}

Kamatsuka and Yoshida \cite{11195284} demonstrated that $\alpha$-MI variants other than Arimoto MI and Hayashi MI can also be represented as the 
difference between R{\' e}nyi entropy and conditional R{\' e}nyi entropy of the form $I_{\alpha}^{(\cdot)}(X; Y):= H_{\alpha}(X) - H_{\alpha}^{(\cdot)}(X | Y)$.

\begin{prop}[\text{\cite[Thm 1]{11195284}}] \label{prop:alpha_MI_diff}
For $\alpha\in (0, 1)\cup (1, \infty)$, 
\begin{align}
I_{\alpha}^{\text{S}}(X; Y) &= H_{\frac{1}{\alpha}}(X) - H_{\alpha}^{\text{S}}(X\mid Y), \label{eq:Sibson_diff_expression}\\ 
I_{\alpha}^{\text{C}}(X; Y) &= H(X) - H_{\alpha}^{\text{C}}(X\mid Y), \label{eq:AC_diff_expression}
\end{align}
where 
\begin{align}
&H_{\alpha}^{\text{S}}(X|Y) := \min_{r_{X\mid Y}}\frac{\alpha}{1-\alpha}\notag \\ 
&\times \log \sum_{x,y}p_{X_{\frac{1}{\alpha}}}(x)p_{Y\mid X}(y | x)r_{X\mid Y}(x | y)^{1-\frac{1}{\alpha}}, \label{eq:Sibson_cond_renyi_ent} \\ 
&H_{\alpha}^{\text{C}}(X|Y) := \min_{r_{X\mid Y}}\frac{\alpha}{1-\alpha} \notag \\ 
&\times \sum_{x}p_{X}(x)\log \sum_{y}p_{Y\mid X}(y|x)r_{X\mid Y}(x|y)^{1-\frac{1}{\alpha}}, \label{eq:AC_cond_renyi_ent}
\end{align}
and $p_{X_{\frac{1}{\alpha}}} = p_{X}^{(\frac{1}{\alpha})}$, the $\frac{1}{\alpha}$-tilted distribution of $p_{X}$. 

For $\alpha\in (1/2, 1) \cup (1, \infty)$, 
\begin{align}
I_{\alpha}^{\text{LP}}(X; Y) &= H_{\frac{\alpha}{2\alpha-1}}(X) - H_{\alpha}^{\text{LP}}(X\mid Y), \label{eq:LP_cond_renyi_ent}
\end{align}
where 
\begin{align}
&H_{\alpha}^{\text{LP}}(X|Y) := \min_{r_{X\mid Y}}\frac{2\alpha-1}{1-\alpha} \notag \\ 
&\times \log \sum_{x}p_{X_{\frac{\alpha}{2\alpha-1}}}(x)\left( \sum_{y}p_{Y\mid X}(y|x)r_{X\mid Y}(x|y)^{1-\frac{1}{\alpha}} \right)^{\frac{\alpha}{2\alpha-1}}
\end{align}
and $p_{X_{\frac{\alpha}{2\alpha-1}}} = p_{X}^{(\frac{\alpha}{2\alpha-1})}$, the $\frac{\alpha}{2\alpha-1}$-tilted distribution of $p_{X}$. 
\end{prop}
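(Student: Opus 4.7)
The three identities share a common structure, so my plan is to treat them by a single variational strategy: express each $\alpha$-MI via its closed form (or defining minimization), then apply one Jensen/H\"older step on $t\mapsto t^{1/\alpha}$ to introduce the auxiliary posterior $r_{X\mid Y}$ appearing in \eqref{eq:Sibson_cond_renyi_ent} and \eqref{eq:AC_cond_renyi_ent}. For the Sibson case, I would start from the standard Sibson identity $I_\alpha^{\text{S}}(X;Y)=\frac{\alpha}{\alpha-1}\log\sum_y(\sum_x p_X(x)p_{Y\mid X}(y\mid x)^\alpha)^{1/\alpha}$. For each $y$, Jensen applied to $t^{1/\alpha}$ with probability weights $r_{X\mid Y}(\cdot\mid y)$ gives a variational equality attained at $r^\ast(x\mid y)\propto p_X(x)p_{Y\mid X}(y\mid x)^\alpha$; the outer optimum is $\max$ for $\alpha>1$ and $\min$ for $\alpha<1$, mirroring the concavity/convexity of $t^{1/\alpha}$. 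Factoring out $C:=\sum_x p_X(x)^{1/\alpha}$ exposes the tilted distribution $p_{X_{1/\alpha}}(x)=p_X(x)^{1/\alpha}/C$ and identifies $H_{1/\alpha}(X)=\frac{\alpha}{\alpha-1}\log C$; the residual summand, after using $\frac{\alpha}{\alpha-1}=-\frac{\alpha}{1-\alpha}$ to swap $\max/\min$, reduces to $-H_\alpha^{\text{S}}(X\mid Y)$.

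For the Augustin--Csisz\'ar identity, the weight $p_X$ sits outside the log in \eqref{eq:AC_MI}, so I would apply the same Jensen step inside the per-$x$ term $\log\sum_y p_{Y\mid X}(y\mid x)^\alpha q_Y(y)^{1-\alpha}$ and read off the change of variables $r^\ast(x\mid y)\propto p_X(x)p_{Y\mid X}(y\mid x)^\alpha q_Y^\ast(y)^{1-\alpha}$ from the Jensen critical point. Because no power of $p_X$ appears outside the log, the plain $H(X)$ arises as the normalization rather than a R\'enyi entropy of different order. For the Lapidoth--Pfister identity, I would first eliminate the inner $\min_{q_X}$ in \eqref{eq:LP_MI} by H\"older's inequality in closed form; the resulting exponent $\alpha/(2\alpha-1)$ is precisely what yields $H_{\alpha/(2\alpha-1)}(X)$ and the tilted law $p_{X_{\alpha/(2\alpha-1)}}$, after which the outer $\min_{q_Y}$ is handled by the same Sibson-style variational substitution.

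The step I expect to be the main obstacle is the Augustin--Csisz\'ar case: $\min_{q_Y}$ in \eqref{eq:AC_MI} is genuinely coupled across $x$ and has no simple closed form (its optimizer is the Augustin mean, characterized only by a fixed-point equation), so the argument must establish that the $(q_Y,r_{X\mid Y})$-duality via Bayes' rule is an equivalence of optimization problems with equal values, not merely a matching of critical points; uniqueness of the Augustin distribution together with strict convexity of the objectives on the simplex is what makes this work. A secondary pitfall is sign-tracking: the coefficient $\frac{\alpha}{1-\alpha}$ changes sign across $\alpha=1$, so in the regime $\alpha<1$ the variational $\min$ over $r_{X\mid Y}$ becomes a $\max$ at the intermediate step and must be reconverted using the monotonicity of the enclosing log.
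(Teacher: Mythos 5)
The paper itself does not prove Proposition~\ref{prop:alpha_MI_diff}; it is quoted from \cite[Thm 1]{11195284}, so your attempt has to stand on its own. The Sibson part of your plan does: starting from the closed form $I_\alpha^{\text{S}}(X;Y)=\frac{\alpha}{\alpha-1}\log\sum_y\bigl(\sum_x p_X(x)p_{Y\mid X}(y\mid x)^\alpha\bigr)^{1/\alpha}$, one Jensen/H\"older step over $x$ with weights $r_{X\mid Y}(\cdot\mid y)$ gives the variational identity with optimizer $r^{*}(x\mid y)$ proportional (in $x$) to $p_X(x)p_{Y\mid X}(y\mid x)^{\alpha}$, and factoring out $C=\sum_x p_X(x)^{1/\alpha}$ produces $H_{1/\alpha}(X)$ and the tilted prior; your sign bookkeeping across $\alpha=1$ is also correct.

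The genuine gap is in the Augustin--Csisz\'ar (and, by the same token, Lapidoth--Pfister) case. Your claimed critical point $r^{*}(x\mid y)\propto p_X(x)p_{Y\mid X}(y\mid x)^{\alpha}q_Y^{*}(y)^{1-\alpha}$ cannot be right: the factor $q_Y^{*}(y)^{1-\alpha}$ is constant in $x$, so it cancels upon normalization and your $r^{*}$ collapses to the Sibson posterior, which does not attain the minimum in \eqref{eq:AC_cond_renyi_ent} in general. Stationarity of \eqref{eq:AC_cond_renyi_ent} under the constraints $\sum_x r(x\mid y)=1$ forces $r(x\mid y)\propto\bigl(p_X(x)p_{Y\mid X}(y\mid x)/G_x(r)\bigr)^{\alpha}$ with $G_x(r)=\sum_y p_{Y\mid X}(y\mid x)r(x\mid y)^{1-1/\alpha}$, i.e.\ a per-$x$ normalizer enters; the correct optimizer is $r^{*}(x\mid y)=p_X(x)\tilde p_{Y\mid X}(y\mid x)/q_Y^{*}(y)$ where $\tilde p_{Y\mid X}(y\mid x)\propto p_{Y\mid X}(y\mid x)^{\alpha}q_Y^{*}(y)^{1-\alpha}$ (normalized over $y$) and $q_Y^{*}$ is the Augustin mean, and the statement $\sum_x r^{*}(x\mid y)=1$ is precisely the Augustin fixed-point equation. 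Moreover, ``the same Jensen step'' does not transpose to \eqref{eq:AC_MI}: there the inner sum is over $y$, while $r_{X\mid Y}(\cdot\mid y)$ is a distribution over $x$, so it cannot serve as Jensen weights. What does work is a joint two-sided argument: for every pair $(r_{X\mid Y},q_Y)$, use the factorization $p_{Y\mid X}(y\mid x)r(x\mid y)^{1-1/\alpha}=\bigl[p_{Y\mid X}(y\mid x)^{\alpha}q_Y(y)^{1-\alpha}\bigr]^{1/\alpha}\bigl[q_Y(y)r(x\mid y)\bigr]^{1-1/\alpha}$ and (reverse) H\"older over $y$, then Gibbs' inequality over $x$ applied to the distribution $x\mapsto\sum_y q_Y(y)r(x\mid y)$ --- this is exactly where $H(X)$ comes from --- to get that the objective of \eqref{eq:AC_cond_renyi_ent} plus $\mathbb{E}^{p_X}\bigl[D_\alpha(p_{Y\mid X}(\cdot\mid X)\,\|\,q_Y)\bigr]$ is at least $H(X)$, with equality at $(r^{*},q_Y^{*})$ above. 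Your LP sketch inherits the same issue: after eliminating $q_X$ by H\"older the remaining $\min_{q_Y}$ has no closed form (its optimizer is again a fixed point), so a ``Sibson-style substitution'' alone does not finish; you need the analogous joint inequality with an extra H\"older step over $x$ to generate the exponent $\tfrac{\alpha}{2\alpha-1}$ and the entropy $H_{\frac{\alpha}{2\alpha-1}}(X)$.
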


\subsection{$q$-Logarithm and Kolmogorov--Nagumo (KN) Mean} \label{ssec:q_logarithm_KN_mean}
We next introduce the $q$-logarithm, the KN mean, and a $q$-logarithmic generalization of Gibbs' inequality.

\begin{definition}[$q$-logarithm and $q$-exponential]
Let $q\in (-\infty, \infty)$. 
The \textit{$q$-logarithm} and \textit{$q$-exponential} are defined as follows:
\begin{align}
\ln_{q}x &:= 
\begin{cases}
\log x, & q=1, \\ 
\frac{x^{1-q}-1}{1-q}, & q\neq 1, 
\end{cases} \\ 
\exp_{q}\{x\} &:= 
\begin{cases}
\exp\{x\}, & q=1, \\ 
[1+(1-q)x]^{\frac{1}{1-q}}, & q\neq 1.
\end{cases}
\end{align}
\end{definition}
\begin{remark}
The functions $\ln_q(\cdot)$ and $\exp_q\{\cdot\}$ are strictly increasing and mutually inverse.
\end{remark}

\begin{prop}[Generalized Gibbs' inequality] \label{prop:generalized_Gibbs_inequality}
Let $q\in (0, 1)\cup (1, \infty)$. Then, the following holds:
\begin{align}
\max_{r\in \Delta_{\mathcal{X}}} \sum_{x}p_{X}(x)\ln_{q}r(x) 
&= \ln_{q} \norm{p_{X}}_{\frac{1}{q}}^{\frac{1}{1-q}}, \label{eq:generalized_Gibbs_inequality}
\end{align}
where the maximum in \eqref{eq:generalized_Gibbs_inequality} is attained at $r = p_{X}^{({1}/{q})}$, the $1/q$-tilted distribution of $p_{X}$. 
\end{prop}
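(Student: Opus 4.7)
The plan is to treat the maximization as a strictly concave program on the simplex and solve it with a single Lagrange-multiplier argument. The first step is to unfold the $q$-logarithm via its definition, giving
\begin{equation*}
\sum_x p_X(x)\ln_q r(x) = \frac{1}{1-q}\Bigl(\sum_x p_X(x)\, r(x)^{1-q} - 1\Bigr),
\end{equation*}
and to observe that $\ln_q$ is strictly concave on $(0,\infty)$ for every $q>0$ because $(\ln_q)''(x) = -q\, x^{-q-1} < 0$. Thus the objective is strictly concave in $r$, any stationary point in the relative interior of $\Delta_{\mathcal{X}}$ is automatically the unique global maximizer, and both regimes $q\in(0,1)$ and $q>1$ can be handled without case analysis.

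Next I would form the Lagrangian $L(r,\lambda) = \sum_x p_X(x)\ln_q r(x) - \lambda\bigl(\sum_x r(x) - 1\bigr)$ and set $\partial L/\partial r(x) = 0$. This yields $p_X(x)\,r(x)^{-q} = \lambda$ for every $x$ in the support of $p_X$, hence $r(x) \propto p_X(x)^{1/q}$; normalization then forces $r^\ast = p_X^{(1/q)}$, matching the claimed tilted distribution. Any $x$ with $p_X(x)=0$ contributes nothing to the objective and can be excluded from the optimization.

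For the final step I would substitute $r^\ast$ back into the objective. Setting $Z := \sum_x p_X(x)^{1/q}$, a short computation gives $p_X(x)\,r^\ast(x)^{1-q} = p_X(x)^{1/q}/Z^{1-q}$, so that $\sum_x p_X(x)\,r^\ast(x)^{1-q} = Z^q$ and the optimal value of the objective equals $(Z^q - 1)/(1-q)$. To put this in the form stated on the right-hand side, I would use $\norm{p_X}_{1/q} = Z^q$ directly from the definition of the $L^{1/q}$-(quasi-)norm, so that $\norm{p_X}_{1/q}^{1/(1-q)} = Z^{q/(1-q)}$, and then verify $\ln_q\bigl(Z^{q/(1-q)}\bigr) = (Z^q-1)/(1-q)$ from the definition of $\ln_q$. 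There is no real obstacle here; the only thing to watch is the algebra of the exponents $1/q$, $1-q$, and $q/(1-q)$, which collapse cleanly. An alternative route, which I would mention as a sanity check, is to invoke (reverse) H\"older's inequality with conjugate pair $(1/q,1/(1-q))$ on $\sum_x p_X(x)\,r(x)^{1-q}$; this gives the same bound and equality condition in one line, at the cost of splitting the cases $q<1$ and $q>1$.
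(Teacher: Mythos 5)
Your proof is correct, but it follows a genuinely different route from the paper. The paper proves the proposition by expanding $\ln_q$ and applying the reverse H\"older inequality (its Lemma 1) with conjugate exponents $(1/q,\,1/(1-q))$ to $\sum_x p_X(x)\,r(x)^{1-q}$, obtaining the upper bound and the maximizer simultaneously from the equality condition $a_i = c\,b_i^{q/(1-q)}$; this is exactly the ``sanity check'' you mention at the end, and it does require tracking the direction of the inequality separately for $q<1$ and $q>1$ (the prefactor $1/(1-q)$ changes sign so that both cases collapse to the same bound). Your argument instead treats the problem as a strictly concave program on the simplex: strict concavity of $\ln_q$ via $(\ln_q)''(x)=-q\,x^{-q-1}<0$, Lagrangian stationarity $p_X(x)\,r(x)^{-q}=\lambda$ giving $r^\ast \propto p_X(x)^{1/q}$, and direct substitution yielding $(Z^q-1)/(1-q)=\ln_q\norm{p_X}_{1/q}^{1/(1-q)}$ with $Z=\sum_x p_X(x)^{1/q}$ and $\norm{p_X}_{1/q}=Z^q$ --- all of which checks out. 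What your approach buys is uniformity in $q$ (no case split) and independence from the reverse H\"older lemma, so the proposition becomes self-contained; what the paper's approach buys is a one-line derivation that needs no discussion of stationarity, boundary behavior, or the support of $p_X$ (your dismissal of coordinates with $p_X(x)=0$ is fine but is a point one must at least mention, since for $q>1$ the objective diverges to $-\infty$ as $r(x)\to 0$ on the support, while for $q<1$ the boundary is benign and optimality of the interior stationary point rests on concavity).
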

\begin{proof}
See Appendix \ref{proof:generalized_Gibbs_inequality}. 
\end{proof}
\begin{remark}
For $q\to 1$, Eq.~\eqref{eq:generalized_Gibbs_inequality} reduces to the classical Gibbs' inequality \cite{bhl76948} (also known as the Shannon--Kolmogorov information inequality): 
For any $r\in \Delta_{\mathcal{X}}$, 
\begin{align}
\sum_{x}p_{X}(x)\log r(x) \leq \sum_{x}p_{X}(x)\log p_{X}(x), \label{eq:Gibbs_inequality}
\end{align}
with equality if and only if $r=p_{X}$.
\end{remark}

\begin{definition}[Kolmogorov--Nagumo (KN) mean]
Let $I\subseteq \br$ and $\varphi\colon I\to \br$ be a strictly monotonic and continuous function. 
Given a distribution $p\in \Delta_{\mathcal{X}}$, the \textit{KN mean} (also known as the \textit{quasi-arithmetic mean}) \textit{of $X$} is defined as follows:
\begin{align}
\mathbb{M}^{p}_{\varphi}[X] &:= \varphi^{-1} \left( \vE^{p}\left[\varphi(X)\right] \right) \\
&= \varphi^{-1}\left( \sum_{x}p(x)\varphi(x) \right),
\end{align}
where $\varphi^{-1}$ denotes the inverse function of $\varphi$.
\end{definition}

\begin{eg}
Examples of the KN mean include the following: 
\begin{itemize}
\item If $\varphi(t)=at + b, a\neq 0$, then $\mathbb{M}^{p}_{\varphi}[X] = \vE^{p}[X]$ (expectation).
\item If $\varphi(t)=\log t$, then $\mathbb{M}^{p}_{\varphi}[X] = \mathbb{G}^{p}[X]:=\prod_{x}x^{p(x)}$ (geometric mean). 
\item If $\varphi(t)=\ln_{q} t$, then $\mathbb{M}^{p}_{\varphi}[X] = \mathbb{H}^{p}_{1-q}[X] := \{\vE^{p}[X^{1-q}]\}^{\frac{1}{1-q}}$ (H{\" o}lder mean of order $1-q$). 
\end{itemize}
\end{eg}

\section{Generalized Vulnerability and Generalized $g$-Leakage}\label{sec:generalized_vulnerarbility_Gibbs_inequality}
This section reviews the notion of generalized vulnerability introduced by Zarrabian and Sadeghi \cite{Zarrabian:2025aa} and introduces a novel formulation of conditional vulnerability, based on a construction distinct from that of \cite{Zarrabian:2025aa}.
Based on these notions, we define a generalized $g$-leakage that serves as a unifying framework for $\alpha$-MI. 

Suppose that an adversary takes an action $A$ regarding the original data $X$ after observing the disclosed information $Y$.
The adversary employs a decision rule $\delta\colon\mathcal{Y}\to\mathcal{A}$ in conjunction with a gain function $g\colon\mathcal{X}\times\mathcal{A}\to\mathbb{R}$, where $\mathcal{A}$ denotes the action space.
When $\mathcal{A}=\Delta_{\mathcal{X}}$, we denote $\delta(y)=r_{X|Y}(\cdot|y)\in\Delta_{\mathcal{X}}$.

\begin{definition}[Generalized vulnerability] \label{def:generalized_vulnerability}
Let $(X,Y)\sim p_X\times p_{Y|X}$ and let $g(x,a)$ be a gain function.
Let $I\subseteq \br$ and let $\varphi,\psi\colon I\to\mathbb{R}$ be continuous and strictly monotone functions.
The \textit{generalized prior vulnerability} and \textit{generalized conditional vulnerability} are defined as follows:
\begin{align}
V^{p}_{\varphi, g}(X) &= \max_{a} \mathbb{M}^{p}_{\varphi}[g(X, a)], \\ 
V^{p}_{\varphi, \psi, g}(X\mid Y) &= \max_{\delta} \mathbb{M}^{p}_{\varphi}\left[\mathbb{M}^{p_{Y\mid X}(\cdot \mid X)}_{\psi}\left[g(X, \delta(Y))\relmiddle{|}X\right]\right]. \label{eq:generalized_vulnerability}
\end{align}
\end{definition}
\begin{remark}
By replacing maximization with minimization, these definitions yield generalized entropy functionals based on a loss function $\ell(x, a)$ rather than a gain function.
We denote these by $H^{p}_{\varphi,\ell}(X)$ and $H^{p}_{\varphi,\psi,\ell}(X|Y)$.
\end{remark}

\begin{remark}
Zarrabian and Sadeghi \cite[Def 10]{Zarrabian:2025aa} introduced an alternative quantity, \textit{generalized average posterior vulnerability}, defined as 
\begin{align}
\hat{V}^{p}_{\psi, \varphi, g}(X\mid Y) &= \mathbb{M}^{p_{Y}}_{\psi}\left[\max_{a} \mathbb{M}^{p_{X\mid Y}(\cdot \mid Y)}_{\varphi}\left[g(X, a)\relmiddle{|} Y\right]\right], 
\end{align}
where $p_Y$ and $p_{X\mid Y}$ are the marginal and conditional distributions induced by $p_X\times p_{Y|X}$.
While our proposal generalizes the Bayes expected gain via KN means, \cite{Zarrabian:2025aa} generalizes the posterior-optimal expected gain.
In general, $V^{p}_{\varphi,\psi,g}(X | Y)$ and $\hat{V}^{p}_{\psi,\varphi,g}(X | Y)$ do not coincide, though, they agree when $\varphi=\psi$. 
\end{remark}

\begin{prop}
For $\varphi=\psi$, the following holds:
\begin{align}
&V^{p}_{\varphi, \varphi, g}(X\mid Y) \notag \\ 
&=  \begin{cases}
\varphi^{-1}\left( \sum_{y}p_{Y}(y)\max_{a}\sum_{x}p_{X\mid Y}(x\mid y)\varphi(g(x, a)) \right), \notag \\ 
\qquad \qquad \qquad \qquad \qquad \qquad \qquad \qquad \qquad  \text{$\varphi$: increasing}, \\
\varphi^{-1}\left( \sum_{y}p_{Y}(y)\min_{a}\sum_{x}p_{X\mid Y}(x\mid y)\varphi(g(x, a)) \right), \notag \\ 
\qquad \qquad \qquad \qquad \qquad \qquad \qquad \qquad \qquad   \text{$\varphi$: decreasing}, 
\end{cases}
\\
&= \hat{V}^{p}_{\varphi, \varphi, g}(X\mid Y)
\end{align}
\end{prop}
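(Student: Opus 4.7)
The plan is to unfold both KN means in the definition of $V^{p}_{\varphi,\varphi,g}(X\mid Y)$, exploit the identity $\varphi\circ\varphi^{-1}=\mathrm{id}$ that is available precisely because $\varphi=\psi$, rewrite the resulting expectation with respect to the joint law $p_{X,Y}=p_{Y}p_{X\mid Y}$, and then push the maximization over $\delta$ inside in a way that respects the monotonicity of $\varphi^{-1}$. Finally, I would do the analogous unfolding for $\hat{V}^{p}_{\varphi,\varphi,g}(X\mid Y)$ and observe that the two expressions coincide term by term.

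First, expanding the inner KN mean gives
\begin{align*}
\mathbb{M}^{p_{Y\mid X}(\cdot\mid x)}_{\varphi}\!\left[g(x,\delta(Y))\right] = \varphi^{-1}\!\left(\sum_{y}p_{Y\mid X}(y\mid x)\,\varphi(g(x,\delta(y)))\right),
\end{align*}
and applying the outer mean $\mathbb{M}^{p}_{\varphi}$ causes the outer $\varphi$ and the inner $\varphi^{-1}$ to cancel, producing
\begin{align*}
V^{p}_{\varphi,\varphi,g}(X\mid Y) = \max_{\delta}\varphi^{-1}\!\left(\sum_{y}p_{Y}(y)\sum_{x}p_{X\mid Y}(x\mid y)\,\varphi(g(x,\delta(y)))\right),
\end{align*}
after using the marginal-conditional decomposition of $p_{X,Y}$.

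Next, since $\varphi^{-1}$ is strictly monotone, maximizing the composition reduces to maximizing (if $\varphi^{-1}$ is increasing, equivalently $\varphi$ is increasing) or minimizing (if $\varphi$ is decreasing) the inner weighted sum. Because $\delta(y)$ can be chosen independently for each $y$ and the weights $p_{Y}(y)$ are nonnegative, the extremum over $\delta$ can be pushed inside the sum over $y$, yielding the two cases displayed in the proposition. Finally, unfolding $\hat{V}^{p}_{\varphi,\varphi,g}(X\mid Y)$ the same way, the outer $\mathbb{M}^{p_{Y}}_{\varphi}$ again cancels one layer of $\varphi^{-1}$, and the identity $\max_{a}\varphi^{-1}(f(a))$ equals $\varphi^{-1}(\max_{a}f(a))$ when $\varphi$ is increasing and $\varphi^{-1}(\min_{a}f(a))$ when $\varphi$ is decreasing, leading to the same closed form.

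The only real obstacle is bookkeeping: one must be careful that the cancellation $\varphi\circ\varphi^{-1}=\mathrm{id}$ is what makes the nested KN means collapse into a single $\varphi$-transformed expectation, and that the sign of monotonicity of $\varphi$ is tracked consistently when commuting $\varphi^{-1}$ with the extremization (both for the $\delta$ in $V$ and for the $a$ in $\hat V$). Once this is handled, the equality $V^{p}_{\varphi,\varphi,g}(X\mid Y)=\hat{V}^{p}_{\varphi,\varphi,g}(X\mid Y)$ follows immediately by matching the two resulting expressions.
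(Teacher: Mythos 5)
Your argument is correct and is in substance the same as the paper's: the paper simply delegates the two equalities to citations (the Bayes-rule/posterior-optimization theorem of Ghosh et al.\ for the first, and Zarrabian--Sadeghi's Remark 3 for the second), and your explicit steps --- collapsing the nested KN means via $\varphi\circ\varphi^{-1}=\mathrm{id}$ when $\varphi=\psi$, commuting the strictly monotone $\varphi^{-1}$ with the extremization (flipping $\max$ to $\min$ when $\varphi$ is decreasing), and optimizing $\delta(y)$ pointwise in $y$ --- are exactly the content of those cited results. No gap; the only implicit assumption, equally implicit in the paper, is that the maxima over $a$ are attained (e.g.\ $\mathcal{A}=\Delta_{\mathcal{X}}$ compact with $g$ continuous).
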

\begin{proof}
The first equality follows from \cite[Thm 2.7]{ghosh2007introduction}; The second equality follows from \cite[Remark 3]{Zarrabian:2025aa}.
\end{proof}

\begin{remark}\label{rem:posterior_optimality_phi}
When $\varphi=\psi$ is increasing, the optimal decision rule maximizes the posterior expected gain for each realization $Y=y$ with respect to the \textit{transformed gain function} $\varphi\circ g$.
Specifically, the optimal action $\delta^{*}(y)$ satisfies
\begin{align}
\delta^{*}(y)  = \argmax_{a}\sum_{x} p_{X\mid Y}(x\mid y)\,\varphi\bigl(g(x,a)\bigr).
\end{align}
\end{remark}

Next, we define the two gain functions that will be used to represent R{\' e}nyi-type entropies and $\alpha$-MI.

\begin{definition} \label{def:soft_power_score}
Let $\mathcal{A} = \Delta_{\mathcal{X}}$ and $\alpha\in (0, 1)\cup (1, \infty)$. 
The \textit{soft $0$-$1$ score} and \textit{power score} \cite{Selten:1998aa} (also known as \textit{Tsallis score}), denoted by $g_{\text{$0$-$1$}}(x, r), f_{\alpha, \text{PW}}(x, r)$, are defined as follows:
\begin{align}
g_{\text{$0$-$1$}}(x, r) &:= r(x), \\ 
f_{\alpha, \text{PW}}(x, r) &:= \alpha r(x)^{\alpha-1} + (1-\alpha)\sum_{x}r(x)^{\alpha}.
\end{align}
\end{definition}
\begin{remark} \label{rem:soft_power_score}
The soft $0$-$1$ score $g_{\text{$0$-$1$}}(x, r)$ is dual to the soft $0$-$1$ loss $\ell_{\text{$0$-$1$}}(x, r):=1-r(x)$ introduced in \cite[Def 3]{8804205}.
The power score $f_{\alpha}(x, r)$ is originally given for $\alpha>1$ \cite[Section 2.7]{Selten:1998aa}; as shown below, it behaves as a gain for $\alpha>1$ and as a loss for $0<\alpha<1$.
\end{remark}
\begin{prop} \label{prop:power_score}
\begin{align}
\begin{cases}
\max_{r} \sum_{x}p_{X}(x)f_{\alpha, \text{PW}}(x, r) = \norm{p_{X}}_{\alpha}^{\alpha}, & \alpha>1, \\ 
\min_{r} \sum_{x}p_{X}(x)f_{\alpha, \text{PW}}(x, r) = \norm{p_{X}}_{\alpha}^{\alpha}, & 0<\alpha<1.
\end{cases}
\end{align}
\end{prop}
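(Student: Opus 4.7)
The plan is to reformulate the expectation of $f_{\alpha, \text{PW}}$ as a sum of tangent-line (Bregman) bounds for the function $\phi(t)=t^{\alpha}$, then exploit its convexity (for $\alpha>1$) or concavity (for $0<\alpha<1$). Since $r=p_{X}$ is always admissible in $\Delta_{\mathcal{X}}$, establishing one-sided bounds of the form $\sum_{x}p_{X}(x) f_{\alpha,\text{PW}}(x, r) \leq \norm{p_{X}}_{\alpha}^{\alpha}$ (respectively $\geq$) will immediately force the bound to be tight once we verify that $r=p_{X}$ attains $\norm{p_{X}}_{\alpha}^{\alpha}$.

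First I would simplify the objective. Because $\sum_{x}p_{X}(x)=1$, the $p_X$-independent summand in $f_{\alpha,\text{PW}}$ pulls out of the sum, so
\begin{align*}
\sum_{x}p_{X}(x) f_{\alpha,\text{PW}}(x, r) = \alpha \sum_{x} p_{X}(x) r(x)^{\alpha-1} + (1-\alpha)\norm{r}_{\alpha}^{\alpha}.
\end{align*}
A direct computation shows that the right-hand side equals $\sum_{x}L\bigl(p_{X}(x);\, r(x)\bigr)$, where $L(u;v):=\phi(v)+\phi'(v)(u-v)=\alpha v^{\alpha-1}u+(1-\alpha)v^{\alpha}$ is the tangent line to $\phi$ at $v$.

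Next, I would invoke the tangent-line characterization of convex/concave functions on $[0,\infty)$. For $\alpha>1$, $\phi$ is strictly convex, so $\phi(u)\geq L(u;v)$ with equality iff $u=v$. Summing with $u=p_{X}(x)$ and $v=r(x)$ over $x$ yields $\norm{p_{X}}_{\alpha}^{\alpha}\geq \sum_{x}p_{X}(x) f_{\alpha,\text{PW}}(x,r)$, with equality at $r=p_{X}$, establishing the maximum. For $0<\alpha<1$, $\phi$ is strictly concave, the tangent inequality reverses, and the same substitution delivers $\norm{p_{X}}_{\alpha}^{\alpha}\leq \sum_{x}p_{X}(x)f_{\alpha,\text{PW}}(x,r)$, again attained at $r=p_{X}$, giving the minimum. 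A quick verification that $\sum_{x}p_{X}(x) f_{\alpha,\text{PW}}(x, p_{X})=\alpha\norm{p_{X}}_{\alpha}^{\alpha}+(1-\alpha)\norm{p_{X}}_{\alpha}^{\alpha}=\norm{p_{X}}_{\alpha}^{\alpha}$ closes the argument.

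The chief subtlety I would address carefully is the boundary behavior for $0<\alpha<1$: since $r(x)^{\alpha-1}$ diverges at $r(x)=0$, the objective equals $+\infty$ whenever the support of $r$ is strictly smaller than that of $p_{X}$. This is consistent with $r=p_{X}$ being the minimizer on the relative interior of the support polytope, and an approximation such as $r_{\varepsilon}:=(1-\varepsilon)r+\varepsilon p_{X}$ combined with continuity covers the edge case. For $\alpha>1$ no such issue arises because $r(x)^{\alpha-1}$ vanishes continuously at $r(x)=0$.
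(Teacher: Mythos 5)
Your proof is correct, and it takes a different route from the paper: the paper does not argue the statement directly but simply cites Selten (Section 2.7) for the case $\alpha>1$ and asserts that $0<\alpha<1$ follows by the same argument, whereas you give a self-contained Savage-type argument, rewriting $\sum_{x}p_{X}(x)f_{\alpha,\text{PW}}(x,r)=\alpha\sum_{x}p_{X}(x)r(x)^{\alpha-1}+(1-\alpha)\norm{r}_{\alpha}^{\alpha}$ as the sum of tangent-line evaluations $L\bigl(p_{X}(x);r(x)\bigr)$ of $\phi(t)=t^{\alpha}$ and invoking strict convexity (resp.\ concavity) to get the one-sided bound, with attainment at $r=p_{X}$. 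Your algebra checks out ($\phi(v)+\phi'(v)(u-v)=\alpha v^{\alpha-1}u+(1-\alpha)v^{\alpha}$, and $\sum_{x}p_{X}(x)f_{\alpha,\text{PW}}(x,p_{X})=\norm{p_{X}}_{\alpha}^{\alpha}$), and the argument in fact yields slightly more than the proposition asks for: strictness of the tangent inequality shows $r=p_{X}$ is the unique optimizer, i.e.\ the power score is strictly proper (for $\alpha>1$) and strictly ``anti-proper'' (for $0<\alpha<1$), which is exactly the propriety fact the paper imports from Selten. Your handling of the boundary for $0<\alpha<1$ is also sound, though the $r_{\varepsilon}$ approximation step is unnecessary: whenever $r(x)=0$ for some $x$ with $p_{X}(x)>0$ the objective is $+\infty$, which trivially exceeds the claimed minimum, and the tangent bound at $v=0$ holds with the usual conventions; for $\alpha>1$ the map $v\mapsto v^{\alpha-1}$ is continuous at $0$, so no issue arises there, as you note. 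The trade-off is transparency versus brevity: your argument makes the mechanism (propriety via convexity of $t\mapsto t^{\alpha}$) explicit and treats both parameter ranges uniformly, while the paper's citation keeps the exposition short at the cost of leaving the $0<\alpha<1$ case implicit.
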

\begin{proof}
The case $\alpha>1$ is shown in \cite[Section 2.7]{Selten:1998aa}; the case $0 < \alpha < 1$ follows by the same argument.
\end{proof}

Finally, we introduce the following generalized $g$-leakage.
\begin{definition}[Generalized $g$-leakage]
Let $(X, Y)\sim p\times p_{Y\mid X}$. Then, the \textit{generalized multiplicative $g$-leakage}, denoted by $\mathcal{L}_{\varphi, \psi, g}^{\times, p}(X\to Y)$ is defined as follows:
\begin{align}
\mathcal{L}_{\varphi, \psi, g}^{\times, p}(X\to Y) &:= \log \frac{V^{p}_{\varphi, \psi, g}(X\mid Y)}{V^{p}_{\varphi, g}(X)}.
\end{align}
\end{definition}
\begin{remark}
When a loss function $\ell(x,a)$ is used in place of a gain function, we denote it as 
$\mathcal{L}^{\times,p}_{\varphi,\psi,\ell}(X\to Y):= \log\frac{H^{p}_{\varphi,\ell}(X)}{H^{p}_{\varphi,\psi,\ell}(X\mid Y)}$ with slight abuse of notation.
\end{remark}

\section{Main Results}\label{sec:main_result}

In this section, we establish that each variant of (conditional) R{\' e}nyi entropy $H_{\alpha}^{(\cdot)}(X | Y)$ appearing in Definition \ref{def:alpha_MI} and Proposition \ref{prop:alpha_MI_diff} admits a representation in terms of generalized vulnerability.
Building on these results, we derive generalized $g$-leakage expressions for $\alpha$-MI and  provide a decision-theoretic interpretation of $\alpha$ as a measure of adversarial risk aversion.

\subsection{Generalized $g$-Leakage expressions for $\alpha$-MI}
\begin{theorem} \label{thm:alpha_conditional_entropy_vulnerability}
Let $\mathcal{A} = \Delta_{\mathcal{X}}$ and let $g_{\text{$0$-$1$}}(x, r)$ and $f_{\alpha, \text{PW}}(x, r)$ denote the soft $0$-$1$ score and the power score defined in Definition \ref{def:soft_power_score}, respectively. 
Then, for $\alpha\in (0, 1)\cup (1, \infty)$, the following hold:
\begin{align}
V_{\log t, g_{\text{$0$-$1$}}}^{p_{X}}(X) &= \exp\{-H(X)\}, \label{eq:vulnerability_entropy} \\
V_{\log t, \log t, g_{\text{$0$-$1$}}}^{p_{X}}(X\mid Y) &= \exp\{-H(X\mid Y)\}, \label{eq:vulnerability_conditional_entropy} \\
V_{\ln_{\frac{1}{\alpha}}t, \ln_{\frac{1}{\alpha}}t,  g_{\text{$0$-$1$}}}^{p_{X}}(X) &= \exp\{-H_{\alpha}(X)\}, \label{eq:vulnerability_Renyi_entropy_01} \\ 
V_{\ln_{\frac{1}{\alpha}}t, \ln_{\frac{1}{\alpha}}t,  g_{\text{$0$-$1$}}}^{p_{X}}(X\mid Y) &= \exp\{-H_{\alpha}^{\text{A}}(X\mid Y)\}, \label{eq:vulnerability_Arimoto_entropy} \\ 
V_{\ln_{\frac{1}{\alpha}}t, \ln_{\frac{1}{\alpha}}t,  g_{\text{$0$-$1$}}}^{p_{X_{\frac{1}{\alpha}}}}(X\mid Y) &= \exp\{-H_{\alpha}^{\text{S}}(X\mid Y)\}, \label{eq:vulnerability_Sibson_entropy} \\ 
V_{\log t, \ln_{\frac{1}{\alpha}}t, g_{\text{$0$-$1$}}}^{p_{X}}(X\mid Y) &= \exp\{-H_{\alpha}^{\text{C}}(X\mid Y)\}, \label{eq:vulnerability_AC_entropy} \\ 
H_{\ln_{\alpha}, f_{\alpha, \text{PW}}^{\frac{1}{1-\alpha}}}^{p_{X}}(X) &= \exp\{H_{\alpha}(X)\}, \label{eq:vulnerability_Renyi_entropy_02} \\ 
H_{\ln_{\alpha}t, \ln_{\alpha}t, f_{\alpha, \text{PW}}^{\frac{1}{1-\alpha}}}^{p_{X}}(X\mid Y) &= \exp\{H_{\alpha}^{\text{H}}(X\mid Y)\}. \label{eq:vulnerability_Hayashi_entropy}
\end{align}
For $\alpha\in (1/2, 1)\cup (1, \infty)$, 
\begin{align}
V_{\ln_{\frac{\alpha}{2\alpha-1}}t, \ln_{\frac{1}{\alpha}}t, g_{\text{$0$-$1$}}}^{p_{X_{\frac{\alpha}{2\alpha-1}}}}(X\mid Y) = \exp\{-H_{\alpha}^{\text{LP}}(X\mid Y)\}. \label{eq:vulnerability_LP_entropy}
\end{align}
\end{theorem}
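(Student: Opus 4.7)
The plan is to verify each of the nine identities by unfolding Definition~\ref{def:generalized_vulnerability} and reducing the inner or outer KN mean in closed form using one of two tools: Proposition~\ref{prop:generalized_Gibbs_inequality} whenever the score is the soft $0$-$1$ score $g_{\text{$0$-$1$}}$, and Proposition~\ref{prop:power_score} together with the linearizing identity $\ln_{\alpha}(y^{1/(1-\alpha)}) = (y-1)/(1-\alpha)$ whenever the score is $f_{\alpha,\text{PW}}^{1/(1-\alpha)}$. The cases split naturally according to whether $\varphi=\psi$, where the reduction stated in the proposition following Definition~\ref{def:generalized_vulnerability} applies, or $\varphi\neq\psi$, where I would work from the definition directly.

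I would first dispatch \eqref{eq:vulnerability_entropy}--\eqref{eq:vulnerability_Renyi_entropy_01}, which are unconditional and reduce by inspection to the classical or generalized Gibbs inequality. For the $\varphi=\psi$ conditional soft-score cases \eqref{eq:vulnerability_conditional_entropy}, \eqref{eq:vulnerability_Arimoto_entropy}, and \eqref{eq:vulnerability_Sibson_entropy}, I would invoke the posterior reduction to move the optimization inside; the pointwise maximizer from Proposition~\ref{prop:generalized_Gibbs_inequality} is the tilted conditional, and applying $\ln_{1/\alpha}^{-1}$ to the resulting $p_Y$-average (or, in the Sibson case, the $\tilde{p}_Y$-average induced by the $1/\alpha$-tilted prior) produces the Arimoto and Sibson forms after collecting $\alpha$-powers. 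The Augustin--Csisz\'ar case \eqref{eq:vulnerability_AC_entropy} has $\varphi\neq\psi$ and must be handled directly: the inner $\ln_{1/\alpha}$-mean reduces to $\bigl(\sum_{y}p_{Y\mid X}(y\mid x)r_{X\mid Y}(x\mid y)^{1-1/\alpha}\bigr)^{\alpha/(\alpha-1)}$, so after taking $\log$ and averaging over $p_X$ the expression matches $\frac{\alpha}{\alpha-1}$ times the quantity whose minimum defines $H_{\alpha}^{\text{C}}(X\mid Y)$ in \eqref{eq:AC_cond_renyi_ent}, and the sign identity $\frac{\alpha}{1-\alpha}=-\frac{\alpha}{\alpha-1}$ converts $\max_{\delta}$ on the outside into $-H_{\alpha}^{\text{C}}(X\mid Y)$.

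For the power-score identities \eqref{eq:vulnerability_Renyi_entropy_02} and \eqref{eq:vulnerability_Hayashi_entropy}, the key observation is that the linearizing identity collapses the KN mean into an ordinary expectation,
\[
\mathbb{M}^{p}_{\ln_{\alpha}}\!\left[f_{\alpha,\text{PW}}(X,a)^{1/(1-\alpha)}\right] = \left(\sum_{x}p(x)\,f_{\alpha,\text{PW}}(x,a)\right)^{\!1/(1-\alpha)},
\]
after which minimization over $a$ reduces via Proposition~\ref{prop:power_score} to $\|p\|_{\alpha}^{\alpha/(1-\alpha)}=\exp\{H_{\alpha}(X)\}$; here one must check that the two regimes $\alpha>1$ (gain-like) and $0<\alpha<1$ (loss-like) give the same final value once the sign of $1/(1-\alpha)$ is accounted for. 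The conditional Hayashi form then follows by combining this pointwise simplification in $y$ with the $\varphi=\psi=\ln_{\alpha}$ posterior reduction.

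The main obstacle I expect is the Lapidoth--Pfister case \eqref{eq:vulnerability_LP_entropy}, where $\varphi=\ln_{\alpha/(2\alpha-1)}$, $\psi=\ln_{1/\alpha}$, and the prior is the $\frac{\alpha}{2\alpha-1}$-tilted distribution $p_{X_{\alpha/(2\alpha-1)}}$, so two distinct $q$-exponents and the tilting must be threaded together. The critical bookkeeping is that setting $q=\alpha/(2\alpha-1)$ yields $1-q=(\alpha-1)/(2\alpha-1)$, so raising the inner expression to $1-q$ produces exactly the exponent $\alpha/(2\alpha-1)$ appearing in the definition of $H_{\alpha}^{\text{LP}}(X\mid Y)$ in Proposition~\ref{prop:alpha_MI_diff}; the outer $\ln_q^{-1}$ then generates the global exponent $(2\alpha-1)/(\alpha-1)$, and I would need to verify that its sign change across $\alpha=1$ is mirrored by the sign change in $\frac{2\alpha-1}{1-\alpha}$, so that $\max_{\delta}$ on the outside consistently matches $\min_{r_{X\mid Y}}$ on the inside in both regimes $\alpha>1$ and $1/2<\alpha<1$.
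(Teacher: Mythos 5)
Your plan is correct and follows essentially the same route as the paper's proof: unfold Definition~\ref{def:generalized_vulnerability}, apply Proposition~\ref{prop:generalized_Gibbs_inequality} (with the per-$y$ posterior reduction when $\varphi=\psi$) for the soft $0$-$1$ score cases, use the linearization $\ln_{\alpha}\bigl(f_{\alpha,\text{PW}}^{1/(1-\alpha)}\bigr)=(f_{\alpha,\text{PW}}-1)/(1-\alpha)$ together with Proposition~\ref{prop:power_score} (with the $\alpha\gtrless 1$ case split) for the Hayashi/R\'enyi power-score cases, and compute $\varphi\circ\psi^{-1}$ with the tilted prior for the Augustin--Csisz\'ar, Sibson, and Lapidoth--Pfister cases. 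The sign issues you flag (e.g.\ matching $\max_{\delta}$ with the $\min_{r_{X\mid Y}}$ in $H_{\alpha}^{\text{C}}$ and $H_{\alpha}^{\text{LP}}$) resolve immediately since $\frac{\alpha}{1-\alpha}=-\frac{\alpha}{\alpha-1}$ and $\frac{2\alpha-1}{1-\alpha}=-\frac{2\alpha-1}{\alpha-1}$ make the two optimizations literally identical, exactly as in the paper.
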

\begin{proof}
See Appendix \ref{proof:alpha_conditional_entropy_vulnerability}. 
\end{proof}

From Theorem \ref{thm:alpha_conditional_entropy_vulnerability}, we immediately obtain the following generalized $g$-leakage characterizations of $\alpha$-MI.

\begin{cor} \label{cor:alpha_MI_generalized_leakage}
Let $(X, Y)\sim p_{X}p_{Y\mid X}$. For $\alpha\in (0, 1)\cup (1, \infty)$, 
\begin{align}
I(X; Y) &= \mathcal{L}_{\log t, \log t, g_{\text{$0$-$1$}}}^{\times, p_{X}}(X\to Y), \\ 
I_{\alpha}^{\text{A}}(X; Y) &= \mathcal{L}_{\ln_{\frac{1}{\alpha}}t, \ln_{\frac{1}{\alpha}}t, g_{\text{$0$-$1$}}}^{\times, p_{X}}(X\to Y), \label{eq:Arimoto_MI_leakage} \\ 
I_{\alpha}^{\text{S}}(X; Y) &= \mathcal{L}_{\ln_{\frac{1}{\alpha}}t, \ln_{\frac{1}{\alpha}}t, g_{\text{$0$-$1$}}}^{\times, p_{X_{\frac{1}{\alpha}}}}(X\to Y), \\ 
I_{\alpha}^{\text{C}}(X; Y) &= \mathcal{L}_{\log t, \ln_{\frac{1}{\alpha}}t, g_{\text{$0$-$1$}}}^{\times, p_{X}}(X\to Y), \\ 
I_{\alpha}^{\text{H}}(X; Y) &= \mathcal{L}_{\ln_{\alpha}t, \ln_{\alpha}t, f_{\alpha, \text{PW}}^{\frac{1}{1-\alpha}}}^{\times, p_{X}}(X\to Y). 
\end{align}
For $\alpha\in (1/2, 1)\cup (1, \infty)$, 
\begin{align}
I_{\alpha}^{\text{LP}}(X; Y) &= \mathcal{L}_{\ln_{\frac{\alpha}{2\alpha-1}}t, \ln_{\frac{1}{\alpha}t}, g_{\text{$0$-$1$}}}^{\times, p_{X_{\frac{\alpha}{2\alpha-1}}}}(X\to Y). 
\end{align}
\end{cor}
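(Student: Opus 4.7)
The corollary follows by direct substitution: each stated identity combines Theorem \ref{thm:alpha_conditional_entropy_vulnerability}, which supplies the required prior and conditional vulnerabilities, with the difference-of-entropies representations of $\alpha$-MI in Definition \ref{def:alpha_MI} and Proposition \ref{prop:alpha_MI_diff}. For each variant I pick the $(\varphi,\psi,g)$ triple and the base distribution so that the numerator $V^{p}_{\varphi,\psi,g}(X\mid Y)$ of the generalized multiplicative leakage reproduces $\exp\{-H_\alpha^{(\cdot)}(X\mid Y)\}$ via the appropriate equation of Theorem \ref{thm:alpha_conditional_entropy_vulnerability}, and the denominator $V^{p}_{\varphi,g}(X)$ reproduces $\exp\{-H_{(\cdot)}(X)\}$ via the corresponding prior statement. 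Taking the logarithm of the ratio then collapses to the required difference of R\'enyi-type entropies, which by Definition \ref{def:alpha_MI} and Proposition \ref{prop:alpha_MI_diff} equals the target $\alpha$-MI.

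For Shannon, Arimoto, Augustin--Csisz\'ar, and Hayashi MI, both numerator and denominator are already evaluated at the common base $p_X$, so the argument reduces to a one-line matching: pair \eqref{eq:vulnerability_entropy} with \eqref{eq:vulnerability_conditional_entropy} for $I(X;Y)$; \eqref{eq:vulnerability_Renyi_entropy_01} with \eqref{eq:vulnerability_Arimoto_entropy} for $I_\alpha^{\mathrm A}$; \eqref{eq:vulnerability_entropy} with \eqref{eq:vulnerability_AC_entropy} for $I_\alpha^{\mathrm C}$; and \eqref{eq:vulnerability_Renyi_entropy_02} with \eqref{eq:vulnerability_Hayashi_entropy} for $I_\alpha^{\mathrm H}$ (the last case using the loss-version leakage $\mathcal{L}^{\times,p}_{\varphi,\psi,\ell}$ noted in the remark after the generalized $g$-leakage definition).

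The Sibson and Lapidoth--Pfister cases require one extra bookkeeping step because \eqref{eq:vulnerability_Sibson_entropy} and \eqref{eq:vulnerability_LP_entropy} are evaluated at the tilted bases $p_{X_{1/\alpha}}$ and $p_{X_{\alpha/(2\alpha-1)}}$, so the matching prior vulnerability in the denominator must be taken at the same tilted base. Applying the prior part of \eqref{eq:vulnerability_Renyi_entropy_01} with $\varphi=\ln_{1/t}$ to a distribution $p^{(t)}$ yields $\exp\{-H_{1/t}(p^{(t)})\}$, and a short direct computation from the definitions of $H_\alpha$ and $p^{(t)}$ establishes the key tilted-entropy identity $H_{1/t}(p^{(t)})=H_t(p)$ for every $t>0$. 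This reduces the denominator to $\exp\{-H_{1/\alpha}(p_X)\}$ and $\exp\{-H_{\alpha/(2\alpha-1)}(p_X)\}$ in the Sibson and Lapidoth--Pfister cases respectively, which are exactly the prior R\'enyi orders appearing in \eqref{eq:Sibson_diff_expression} and \eqref{eq:LP_cond_renyi_ent}; the log-ratio then yields the stated $\alpha$-MI. The only non-routine step I anticipate is verifying this short tilted-entropy identity, which will be the (minor) main obstacle; the remainder is pure bookkeeping.
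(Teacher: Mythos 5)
Your proposal is correct and follows essentially the same route as the paper, which obtains the corollary exactly by pairing the prior and conditional vulnerabilities of Theorem \ref{thm:alpha_conditional_entropy_vulnerability} with the difference-of-entropies representations in Definition \ref{def:alpha_MI} and Proposition \ref{prop:alpha_MI_diff}; your tilted-entropy identity $H_{1/t}(p^{(t)})=H_{t}(p)$ (with $t=1/\alpha$ for Sibson and $t=\alpha/(2\alpha-1)$ for Lapidoth--Pfister) is precisely the step the paper leaves implicit when the denominator is evaluated at the tilted prior. One small notational slip: for tilt $t$ the prior KN function actually used is $\varphi=\ln_{t}$ (e.g.\ $\ln_{1/\alpha}$ in the Sibson case), not $\ln_{1/t}$, but your stated intermediate value $\exp\{-H_{1/t}(p^{(t)})\}$, the identity itself, and the final reductions to $\exp\{-H_{1/\alpha}(p_{X})\}$ and $\exp\{-H_{\alpha/(2\alpha-1)}(p_{X})\}$ are all correct.
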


\begin{remark}\label{rem:comparison_alpha_MI_leakage}
First, Corollary \ref{cor:alpha_MI_generalized_leakage} shows that the distinctions among various
$\alpha$-MI measures are fully captured by the tuple $(p, \varphi, \psi, g)$.
Different definitions correspond to specific methods of aggregating the adversarial gain via the KN mean and to different choices of prior distribution. 
In particular,  the generalized $g$-leakage representation of Arimoto MI coincides with the result of Zarrabian and Sadeghi\cite[Prop 2]{Zarrabian:2025aa}.
Second, except for Hayashi MI, all generalized leakage representations employ the same gain function, namely the soft $0$-$1$ score $g_{\text{$0$-$1$}}(x, r)$.
Third, in the generalized $g$-leakage representations derived in Corollary~\ref{cor:alpha_MI_generalized_leakage}, Shannon MI, Arimoto MI, and Sibson MI correspond to the special case $\varphi=\psi$. 
As discussed in Remark \ref{rem:posterior_optimality_phi}, this implies that the adversary's optimal decision reduces to maximizing posterior expected gain with respect to a transformed gain function $\varphi\circ g$.
Arimoto MI employs the original prior $p_X$, whereas Sibson MI uses the $1/\alpha$-tilted prior $p_{X_{\frac{1}{\alpha}}}=p_X^{(1/\alpha)}$. 
This distinction admits a natural interpretation in terms of the adversary's prior belief. 
For $0<\alpha<1$, the tilted distribution $p_X^{(1/\alpha)}$ overemphasizes symbols with large probability under $p_X$ while effectively ignoring low-probability symbols, corresponding to an adversary that focuses on highly likely events. 
For $\alpha>1$, the tilting operation downweights high-probability symbols and upweights low-probability ones, yielding a belief closer to the uniform distribution and reflecting a more balanced or uncertainty-averse prior held by the adversary. Similar observations appear in \cite{10619672}, \cite{11195284}, and \cite{9761766}.
\end{remark}

\subsection{An Interpretation of $\alpha$ as the Adversary's Risk Aversion}
\label{ssec:interpretation_alpha}

We now provide a utility-theoretic interpretation of the parameter $\alpha$ from the perspective of the adversary's risk attitude.
We focus on the case of Arimoto MI, whose generalized $g$-leakage representation corresponds to the choice
$(p,\varphi,\psi,g) = (p_X,\ln_{1/\alpha} t,\ln_{1/\alpha} t,g_{\text{$0$-$1$}})$.
In this setting, the transformed gain function is given by
\begin{align}
g_{\alpha}(x,r) &:= \varphi\!\left(g_{0\text{-}1}(x,r)\right) = \ln_{1/\alpha} r(x),
\end{align}
where $r(x)$ represents the estimated probability assigned by the adversary to the true symbol $x$.
Figure \ref{fig:600_dpi_comparison_alpha_MI} illustrates the shape of the transformed gain function $g_{\alpha}(x,r)$ as a function of $r(x)$ for different values of $\alpha$.
As $\alpha$ decreases, the transformed gain function increasingly penalizes small values of $r(x)$, assigning disproportionately low gain when the probability of correctly guessing $x$ is small. 
In contrast, as $\alpha$ increases, the gain function becomes closer to linear in $r(x)$. In the limiting case $\alpha=\infty$, the transformed gain approaches $r(x)-1$, corresponding to a risk-neutral adversary.
This observation admits a natural interpretation in terms of expected utility theory (see, e.g., \cite{alma990612587980206881}).
In particular, the curvature of the gain function $g_{\alpha}(x,r)$ quantifies the adversary's attitude toward risk. Using the Arrow--Pratt measure of absolute risk aversion $A_g(t)$, defined for a twice-differentiable gain function $g$, 
\begin{align}
A_g(t)&:=-\,\frac{g^{\prime \prime}(t)}{g^{\prime}(t)},
\end{align}
a direct calculation yields
\begin{align}
A_{g_{\alpha}}(r(x)) &= \frac{1}{\alpha\, r(x)}.
\end{align}
Since $A_{g_{\alpha}}(r(x))$ is monotonically decreasing in $\alpha$ for each $r(x)\in (0,1]$, smaller values of $\alpha$ correspond to more risk-averse adversaries, whereas larger values of $\alpha$ correspond to less risk-averse (and hence more risk-neutral) adversaries.
Therefore, the parameter $\alpha$ in Arimoto MI admits a clear utility-theoretic interpretation as a measure of the adversary's degree of risk aversion.

\begin{figure}[t]
\centering
\includegraphics[width=0.5\textwidth,keepaspectratio, clip]{./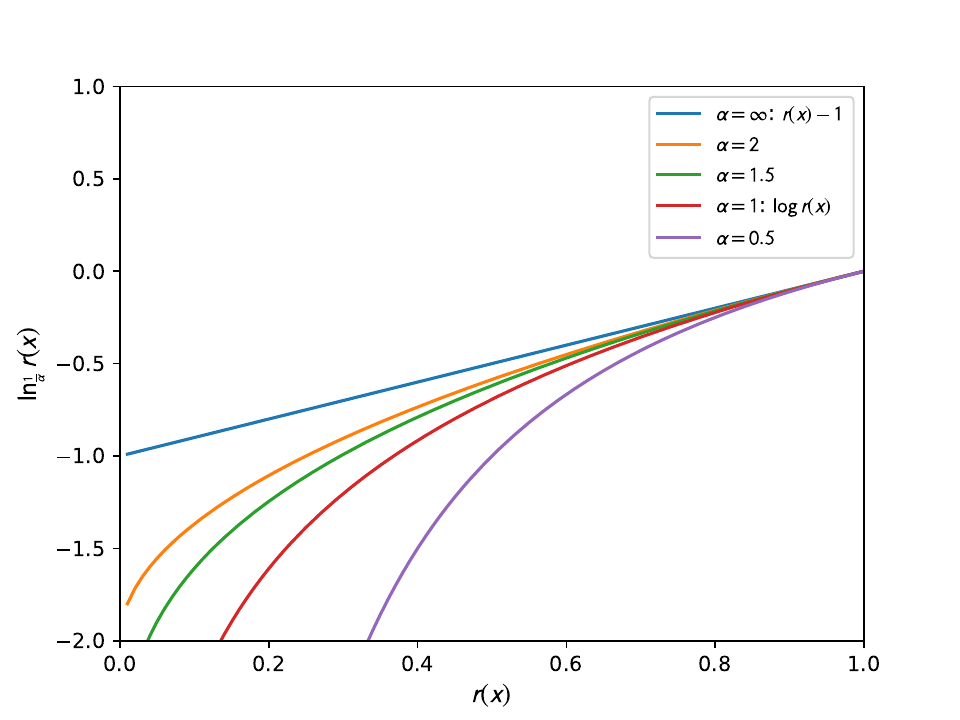}
\caption{Plot of $g_{\alpha}(x, r)=\ln_{1/\alpha} r(x)$}
\label{fig:600_dpi_comparison_alpha_MI}
\end{figure}

\section{Conclusion}\label{sec:conclusion}
This paper established novel decision-theoretic interpretations of $\alpha$-MI as generalized $g$-leakage measures within an extended QIF framework based on KN means and the $q$-logarithm.
By introducing a new notion of conditional vulnerability, formulated through a decision rule-based approach distinct from existing definitions, and deriving a $q$-logarithmic generalization of Gibbs' inequality, 
we enabled unified generalized $g$-leakage representations of various $\alpha$-MI, including forms beyond Arimoto MI. 
The resulting framework clarifies that the differences among $\alpha$-MI measures stem from the choice of aggregation functions and prior distributions in the underlying leakage representations, 
thereby providing a coherent structural understanding of their relationships.
Furthermore, focusing on Arimoto MI, we showed that the parameter $\alpha$ admits a clear utility-theoretic interpretation as the adversary’s degree of risk aversion, quantified via the curvature of the transformed gain function.


\appendices

\section{Proof of Proposition \ref{prop:generalized_Gibbs_inequality}}\label{proof:generalized_Gibbs_inequality}
To prove Proposition \ref{prop:generalized_Gibbs_inequality}, we first recall the following variant of H{\" older}'s inequality.

\begin{lemma}[Reverse H{\" o}lder's inequality \text{\cite{2946733}}] \label{lem:Holder_rev_inequality}
Let $a_{i}\geq 0, b_{i}\geq 0, i=1,\dots, n$ and $p\in (0, 1)\cup (1, \infty)$. Then, the following holds:
\begin{align}
\begin{cases}
\sum_{i=1}^{n} a_{i}b_{i} \geq \left( \sum_{i=1}^{n} a_{i}^{\frac{1}{p}} \right)^{p} \left( \sum_{i=1}^{n} b_{i}^{\frac{1}{1-p}} \right)^{1-p}, & p > 1, \\ 
\sum_{i=1}^{n} a_{i}b_{i} \leq \left( \sum_{i=1}^{n} a_{i}^{\frac{1}{p}} \right)^{p} \left( \sum_{i=1}^{n} b_{i}^{\frac{1}{1-p}} \right)^{1-p}, & 0 < p < 1, 
\end{cases}
\label{eq:Holder_rev_inequality}
\end{align}
with equality if and only if there exists a constant $c$ such that for all $i=1,\dots, n$, $a_{i} = cb_{i}^{\frac{p}{1-p}}$.
\end{lemma}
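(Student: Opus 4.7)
The plan is to derive both cases from the classical H\"older inequality by choosing appropriate pairs of conjugate exponents, splitting the argument at $p=1$.

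For $0<p<1$, the exponents $\alpha:=1/p$ and $\beta:=1/(1-p)$ are both strictly greater than $1$ and satisfy the conjugacy relation $1/\alpha+1/\beta=p+(1-p)=1$. Applying the standard H\"older inequality to $x_i=a_i$, $y_i=b_i$ with this pair yields the claimed upper bound directly. The equality clause of standard H\"older, namely $a_i^{1/p}=c'\,b_i^{1/(1-p)}$ for some constant $c'\geq 0$ independent of $i$, is equivalent to $a_i=c\,b_i^{p/(1-p)}$ after raising to the $p$-th power, matching the stated condition.

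For $p>1$, the target exponents $1/p$ and $1/(1-p)$ are $<1$ and $<0$ respectively, so direct H\"older does not apply. Instead, I would use the substitution $u_i:=(a_i b_i)^{1/p}$ and $v_i:=b_i^{-1/p}$, chosen so that $u_i v_i = a_i^{1/p}$. Standard H\"older with the conjugate pair $(p,\,p/(p-1))$, both strictly greater than $1$, gives
\[
\sum_{i=1}^n a_i^{1/p} \;\leq\; \Bigl(\sum_{i=1}^n a_i b_i\Bigr)^{1/p}\Bigl(\sum_{i=1}^n b_i^{1/(1-p)}\Bigr)^{(p-1)/p},
\]
using the identity $-1/(p-1)=1/(1-p)$. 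Raising to the $p$-th power and dividing by the positive quantity $(\sum_i b_i^{1/(1-p)})^{p-1}$ yields the desired reverse inequality. Tracing the H\"older equality condition $u_i^{p}\propto v_i^{p/(p-1)}$ through the substitution gives $a_i b_i \propto b_i^{1/(1-p)}$, which simplifies to $a_i=c\,b_i^{p/(1-p)}$, again matching the stated characterization.

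The main bookkeeping obstacle is the treatment of indices with $b_i=0$ when $p>1$, since then both $b_i^{1/(1-p)}$ and $v_i=b_i^{-1/p}$ are ill-defined. Adopting the standard convention $0^{\text{negative}}:=+\infty$, the right-hand side of the claimed inequality collapses to $0$ because $(+\infty)^{1-p}=0$ for $p>1$, and the inequality becomes trivial; alternatively, one may simply restrict the summation to the support of $b$. Apart from this boundary handling, the proof is a one-line consequence of classical H\"older in each regime.
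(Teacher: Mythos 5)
Your proof is correct. The paper does not actually prove this lemma---it imports it by citation---and your derivation (direct classical H\"older for $0<p<1$; the substitution $u_i=(a_ib_i)^{1/p}$, $v_i=b_i^{-1/p}$ with conjugate exponents $(p,\,p/(p-1))$ for $p>1$, plus the observation that the inequality degenerates trivially when some $b_i=0$) is the standard argument given in the cited source, with the equality conditions traced through correctly in both regimes.
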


Using Lemma \ref{lem:Holder_rev_inequality}, we prove Proposition \ref{prop:generalized_Gibbs_inequality} as follows.
\begin{align}
&\sum_{x}p_{X}(x)\ln_{q}r(x) \notag \\ 
&= \frac{1}{1-q} \left( \sum_{x}p_{X}(x)r(x)^{1-q}-1 \right) \\ 
&\overset{(a)}{\leq} \frac{1}{1-q} \left( \left( \sum_{x} p_{X}(x)^{\frac{1}{q}} \right)^{q}\left( \sum_{x}r(x)^{(1-q)\cdot \frac{1}{1-q}} \right)^{1-q} -1\right) \label{eq:proof_generalized_Gibbs_inequality} \\
&= \frac{1}{1-q} \left( \norm{p_{X}}_{\frac{1}{q}} - 1\right) = \ln_{q} \norm{p_{X}}_{\frac{1}{q}}^{\frac{1}{1-q}},
\end{align}
where $(a)$ and its equality condition follow from Lemma \ref{lem:Holder_rev_inequality}. 

\section{Proof of Theorem \ref{thm:alpha_conditional_entropy_vulnerability}} \label{proof:alpha_conditional_entropy_vulnerability}

We now prove Theorem \ref{thm:alpha_conditional_entropy_vulnerability} step by step. 
\begin{align} 
&V_{\log t, g_{\text{$0$-$1$}}}^{p_{X}}(X) = \max_{r} \exp\left\{\sum_{x}p_{X}(x)\log r(x) \right\} \\
&= \exp\left\{\max_{r}\sum_{x}p_{X}(x)\log r(x) \right\} \overset{(a)}{=} \exp\{-H(X)\}, \\ 
&V_{\log t, \log t, g_{\text{$0$-$1$}}}^{p_{X}}(X\mid Y) \notag \\ 
&= \max_{r_{X\mid Y}} \exp\left\{\sum_{x,y}p_{X}(x)p_{Y\mid X}(y\mid x)\log r_{X\mid Y}(x\mid y) \right\} \\
&= \exp\left\{\sum_{y}p_{Y}(y)\max_{r_{X\mid Y}(\cdot \mid y)}p_{X\mid Y}(x\mid y)\log r_{X\mid Y}(x\mid y) \right\} \notag \\ 
&\overset{(b)}{=} \exp\{-H(X\mid Y)\}, 
\end{align}
where $(a)$ and $(b)$ follow from Gibbs' inequality \eqref{eq:Gibbs_inequality} and definitions of Shannon entropy and condtional entropy. 
\begin{align}
&V_{\ln_{\frac{1}{\alpha}}t, g_{\text{$0$-$1$}}}^{p_{X}}(X) 
= \max_{r} \exp_{\frac{1}{\alpha}}\left\{\sum_{x}p_{X}(x)\ln_{\frac{1}{\alpha}}r(x)\sum\right\} \\ 
&= \exp_{\frac{1}{\alpha}}\left\{\max_{r}\sum_{x}p_{X}(x)\ln_{\frac{1}{\alpha}}r(x)\right\} \\ 
&\overset{(c)}{=} \exp_{\frac{1}{\alpha}}\left\{\norm{p_{X}}_{\alpha}^{\frac{\alpha}{\alpha-1}}\right\} \overset{(d)}{=} \exp\left\{H_{\alpha}(X)\right\}, \\ 
&V_{\ln_{\frac{1}{\alpha}}t, \ln_{\frac{1}{\alpha}}t, g_{\text{$0$-$1$}}}^{p_{X}}(X\mid Y) \notag \\
&= \max_{r_{X\mid Y}} \exp_{\frac{1}{\alpha}}\left\{\sum_{x,y}p_{X}(x)p_{Y\mid X}(y\mid x)\ln_{\frac{1}{\alpha}} r_{X\mid Y}(x\mid y) \right\} \\ 
&= \exp_{\frac{1}{\alpha}}\left\{\max_{r_{X\mid Y}} \sum_{x,y}p_{X}(x)p_{Y\mid X}(y\mid x)\ln_{\frac{1}{\alpha}} r_{X\mid Y}(x\mid y) \right\} \\ 
&= \exp_{\frac{1}{\alpha}}\left\{\sum_{y}p_{Y}(y)\max_{r_{X\mid Y}(\cdot\mid y)}\sum_{x}p_{X\mid Y}(x | y)\ln_{\frac{1}{\alpha}} r_{X\mid Y}(x | y) \right\} \\ 
&\overset{(e)}{=} \exp_{\frac{1}{\alpha}}\left\{\sum_{y}p_{Y}(y)\ln_{\frac{1}{\alpha}} \norm{p_{X\mid Y}(\cdot \mid y)}_{\alpha}^{\frac{\alpha}{\alpha-1}} \right\} \\ 
&= \left( \sum_{y}p_{Y}(y) \norm{p_{X\mid Y}(\cdot \mid y)}_{\alpha} \right)^{\frac{\alpha}{\alpha-1}} \overset{(f)}{=} \exp\left\{-H_{\alpha}^{\text{A}}(X\mid Y) \right\},
\end{align}
where 
\begin{itemize}
\item $(c)$ and $(e)$ follow from Proposition \ref{prop:generalized_Gibbs_inequality} by substituting $q={1}/{\alpha}$.
\item $(d)$ and $(f)$ follow from Eqs.~\eqref{eq:Renyi_ent} and \eqref{eq:Arimoto_cond_renyi_ent}, respectively.
\end{itemize}
Eq.~\eqref{eq:vulnerability_Sibson_entropy} follows from Eq.~\eqref{eq:vulnerability_Arimoto_entropy} by replacing $p_X$ with $p_{X_{\frac{1}{\alpha}}}$.
\begin{align}
&V_{\log t, \ln_{q}t, g_{\text{$0$-$1$}}}^{p_{X}}(X\mid Y) \notag \\
&= \max_{r_{X\mid Y}} \exp\Big\{\sum_{x}p_{X}(x) \notag \\ 
&\qquad \qquad  \times \log \Big(\exp_{q}\Big\{\sum_{y}p_{Y\mid X}(y\mid x)\ln_{q} r_{X\mid Y}(x\mid y) \Big\} \Big)\Big\} \\ 
&= \exp\Big\{\max_{r_{X\mid Y}}\Big( \sum_{x}p_{X}(x) \notag \\ 
&\qquad \qquad \times \log \sum_{y}p_{Y\mid X}(y\mid x)r_{X\mid Y}(x | y)^{1-\frac{1}{\alpha}} \Big)^{1-\frac{1}{\alpha}}\Big\} \\
&\overset{(f)}{=} \exp\left\{-H_{\alpha}^{\text{C}}(X\mid Y) \right\},
\end{align}
where $(f)$ follows from Eq.~\eqref{eq:AC_cond_renyi_ent}.

Define the following loss function\footnote{Notably, from Remark \ref{rem:soft_power_score} and the fact that 
$t\mapsto  t^{\frac{1}{1-\alpha}}$ is increasing for $\alpha\in (0, 1)$; decresing for $\alpha\in (1, \infty)$, the function $\tilde{\ell}_{\alpha}(x, r)$ can be seen as a loss function.}: 
\begin{align}
\tilde{\ell}_{\alpha}(x, r) &:= f_{\alpha, \text{PW}}(x, r)^{\frac{1}{1-\alpha}}.
\end{align}
Then, we obtain the following: 
\begin{align}
&H_{\ln_{\alpha}t, \tilde{\ell}_{\alpha}}^{p_{X}}(X) = \min_{r} \exp_{\alpha}\left\{\sum_{x}p_{X}(x)\ln_{\alpha} \tilde{\ell}_{\alpha}(x, r) \right\} \\
&= \exp_{\alpha}\left\{\min_{r} \sum_{x}p_{X}(x) \tilde{\ell}_{\alpha}(x, r)\right\} \\ 
&= \begin{cases}
\left[ \max_{r} \sum_{x}p_{X}(x) f_{\alpha, \text{PW}}(x, r) \right]^{\frac{1}{1-\alpha}}, & \alpha>1, \\
\left[ \min_{r} \sum_{x}p_{X}(x) f_{\alpha, \text{PW}}(x, r) \right]^{\frac{1}{1-\alpha}}, & 0<\alpha<1
\end{cases} \\
&\overset{(g)}{=} \norm{p_{X}}_{\alpha}^{\frac{\alpha}{1-\alpha}} \overset{(h)}{=} \exp\{H_{\alpha}(X)\}, \\
&H_{\ln_{\alpha}t, \ln_{\alpha}t, \tilde{\ell}_{\alpha}}^{p_{X}}(X\mid Y) \notag \\ 
&= \min_{r_{X\mid Y}} \exp_{\alpha} \left\{\sum_{x,y}p_{X}(x)p_{Y\mid X}(y\mid x)\tilde{\ell}_{\alpha}(x, r_{X\mid Y}(\cdot \mid y)) \right\} \\ 
&= \exp_{\alpha} \left\{\min_{r_{X\mid Y}}\sum_{x,y}p_{X}(x)p_{Y\mid X}(y\mid x)\tilde{\ell}_{\alpha}(x, r_{X\mid Y}(\cdot \mid y)) \right\} \\ 
&= \exp_{\alpha} \left\{\sum_{y}p_{Y}(y)\min_{r_{X\mid Y}(\cdot\mid y)}\sum_{x}p_{X\mid Y}(x | y)\tilde{\ell}_{\alpha}(x, r_{X\mid Y}(\cdot | y)) \right\} \\ 
&\overset{(h)}{=} 
\begin{cases}
{\displaystyle \left[\sum_{y}p_{Y}(y) \max_{r_{X\mid Y}(\cdot\mid y)} \sum_{x}p_{X}(x) f_{\alpha, \text{PW}}(x, r_{X\mid Y}(\cdot \mid y)) \right]^{\frac{1}{1-\alpha}}}, \notag \\
\qquad \qquad \qquad \qquad \qquad \qquad \qquad \qquad \qquad \quad \alpha > 1, \\ 
{\displaystyle \left[\sum_{y}p_{Y}(y) \min_{r_{X\mid Y}(\cdot\mid y)} \sum_{x}p_{X}(x) f_{\alpha, \text{PW}}(x, r_{X\mid Y}(\cdot \mid y)) \right]^{\frac{1}{1-\alpha}}}, \notag \\ 
\qquad \qquad \qquad \qquad \qquad \qquad \qquad \qquad \qquad \quad 0<\alpha<1  
\end{cases}\\
&\overset{(i)}{=} \left[\sum_{y}p_{Y}(y)\norm{p_{X\mid Y}(\cdot \mid y)}_{\alpha}^{\alpha}\right]^{\frac{1}{1-\alpha}} 
\overset{(j)}{=} \exp\{H_{\alpha}^{\text{H}}(X\mid Y)\}, 
\end{align}
where 
\begin{itemize}
\item $(g)$ and $(i)$ follow from Proposition \ref{prop:power_score}.
\item $(h)$ and $(j)$ follow from Eqs.~\eqref{eq:Renyi_ent} and \eqref{eq:Hayashi_cond_renyi_ent}, respectively.
\end{itemize}

Finally, 
letting $\varphi(t):=\ln_{\frac{\alpha}{2\alpha-1}}t$ and $\psi(t):= \ln_{\frac{1}{\alpha}}t$, we obtain 
\begin{align}
\varphi\circ \psi^{-1}(t) &= \frac{2\alpha - 1}{\alpha-1} \left( \left[1 + \left( 1-\frac{1}{\alpha} \right)t\right]^{\frac{\alpha}{2\alpha-1}} - 1 \right), 
\end{align}
which leads to the following: 
\begin{align}
&V_{\varphi, \psi, g_{\text{$0$-$1$}}}^{p_{X}}(X\mid Y) \notag \\
&= \max_{r_{X\mid Y}} \left[\sum_{x}p_{X}(x)\left[\sum_{y}p_{Y\mid X}(y | x)r_{X\mid Y}(x | y)^{1-\frac{1}{\alpha}}\right]^{\frac{\alpha}{2\alpha-1}}\right]^{\frac{2\alpha-1}{\alpha-1}}. \label{eq:algebra_LP}
\end{align}
By replacing $p_{X}$ with $p_{X_{\frac{\alpha}{2\alpha-1}}}$ and from Eq.~\eqref{eq:Hayashi_cond_renyi_ent}, we obtain 
\begin{align}
V_{\varphi, \psi, g_{\text{$0$-$1$}}}^{p_{X_{\frac{\alpha}{2\alpha-1}}}}(X\mid Y) &= \exp\{-H_{\alpha}^{\text{LP}}(X\mid Y)\}.
\end{align}

\end{document}